\newcommand{\cc}[1]{{\color{black} #1}} 
\newcommand{\ccz}[1]{{\color{black} #1}} 
\newcommand{\norm}[1]{\left\lVert#1\right\rVert}
\newcommand{\minlamb}[1]{\lambda_{\tt{min}}(#1)}
\newcommand{\maxlamb}[1]{\lambda_{\tt{max}}(#1)}
\newcommand{\inv}[1]{#1^{-1}}
\def\rea{\mathds{R}}
\def\cmp{\mathds{C}}
\newtheorem{remark}{Remark}
\newtheorem{theorem}{Theorem}
\newenvironment{proof}{\textit{Proof:}}{\hfill$\blacksquare$}
\newtheorem{proposition}{Proposition}
\def\BibTeX{{\rm B\kern-.05em{\sc i\kern-.025em b}\kern-.08em
		T\kern-.1667em\lower.7ex\hbox{E}\kern-.125emX}}
\begin{document}
\begin{frontmatter}

\title{Dead-zone compensation via passivity-based control for a class of mechanical systems}

\thanks[footnoteinfo]{The work of C. Chan-Zheng is sponsored by the University of Costa Rica.}
\author[First]{Carmen Chan-Zheng} 
\author[Second]{Pablo Borja} 
\author[First]{Jacquelien M.A. Scherpen}

\address[First]{ Jan C. Willems Center for Systems and Control, and Engineering and Technology institute Groningen (ENTEG), Faculty of Science and Engineering at the University of Groningen,  9747 AG Groningen, The Netherlands (email: c.chan.zheng@rug.nl, j.m.a.scherpen@rug.nl).}
\address[Second]{  School of Engineering, Computing and Mathematics, University of Plymouth, Plymouth, Devon PL4 8AA, United Kingdom  (email: pablo.borjarosales@plymouth.ac.uk).}

\begin{abstract}                
	This manuscript introduces a passivity-based control methodology for fully-actuated mechanical systems with symmetric or asymmetric dead-zones. To this end, we find a smooth approximation of the inverse of the function that describes such a nonlinearity. Then, we propose an energy and damping injection approach~---~based on the PI-PBC technique~---~that compensates for the dead-zone. Moreover, we provide an analysis of the performance of the proposed controller near the equilibrium. We conclude this paper by experimentally validating the results on a two degrees-of-freedom planar manipulator.
\end{abstract}

\begin{keyword}
damping injection, energy shaping, passivity, pid, tuning.
\end{keyword}

\end{frontmatter}

	\section{Introduction}

The implementation of approaches for controlling mechanical systems is often hindered by unmodeled nonlinear phenomena such as saturation, dry friction, or asymmetry of the motors. The dead-zones are particularly ubiquitous in servomechanisms (or actuators in general), and their presence severely affects the performance. For example, steady-state errors occur when the nonzero control input falls within the dead-zone region. To handle this nonlinearity, the authors in \cite{na2018adaptive} provide a compilation of recent results on adaptive control to compensate for asymmetric and symmetric dead-zones. Other methodologies to deal with dead-zones are fuzzy logic control (see \cite{woo1997deadzone,betancor2014deadzone}) or neural networks (see \cite{selmic2000deadzone}). In this manuscript, we focus on providing dead-zones compensation based on the passivity-based approach (PBC).

The PBC strategies offer a constructive approach for stabilizing a large class of complex physical systems where the exchange of energy between the plant and the environment plays a central role (see \cite{vanderSchaft2017,ortega2013passivity}). Customarily, these techniques follow two steps: i) the energy shaping process, which guarantees that the closed-loop system has a stable equilibrium at the desired configuration; and ii) the damping injection step, which ensures asymptotic stability properties for the desired equilibrium point. As a consequence of these steps, the PBC gains are associated with the physical quantities of the closed-loop system, i.e., energy and damping, endowing the controller with physical intuition. However, most of these techniques overlook the dead-zone resulting in steady-state errors as reported in \cite{chan2021tuning,chan2022tuning}. Albeit steady-state errors can be solved by adding an integral action to the PBC approach~---~as described in \cite{chan2022integral,ortega2012robust,dirksz2012power}~---~its implementation may be hampered due to the increase in complexity of the tuning process and unexpected behaviours such as wind-up, backslash or instability.
Other PBC techniques that deal with dead-zones are found in \cite{mizumoto2012control,jung2022synchronous}. The former proposes a passivity-based adaptive control approach, while the latter, presents a passivity-based sliding mode control.

In this paper, we design a PBC approach with dead-zone compensation based on a PI-PBC strategy described in \cite{ortega2021pid, borja2020new}. The advantage of this technique is that the control design process is reduced to select the gains properly to achieve a particular performance.
The proposed control technique is suitable for fully-actuated mechanical systems described in the port-Hamiltonian (pH) setting (see \cite{duindam2009modeling}). The main advantage is that this framework highlights the role of the interconnection structure, dissipation, and energy play in the system behaviour. Additionally, the proposed PBC approach is suitable to compensate for both symmetric and asymmetric dead-zones, and its stability is proven by invoking passivity properties. Note that by avoiding dynamical extension~---~as found in adaptive or integral control methodologies~---~we ease the practical implementation of the controller as the tuning process is simplified. Our main contributions are summarized as follows:
\begin{itemize}
	\item A PBC technique based on a PI-PBC approach that deals with symmetric and asymmetric dead-zones.
	\item A performance analysis in a vicinity of the equilibrium of the closed loop. 
\end{itemize}

The remainder of this paper is structured as follows: in Section \ref{prel} we provide the theoretical background and formulate the problem under study.  In Section \ref{main}, we present the main results of this paper, i.e., the PBC technique with dead-zone compensation. Section \ref{performance} provides some remarks on the performance of the closed-loop system. In Section \ref{exp_res}, we show experimental results obtained from a 2 degrees-of freedom (DoF) planar manipulator. We finalize this manuscript with some concluding remarks and future research in Section \ref{conclusions}.

\textbf{Notation}: We denote the $n\times n$ identity matrix as $I_n$ and the $n\times m$ matrix of zeros as $0_{n\times m}$. For a given smooth function $f:\rea^n\to \rea$, we define the differential operator $\nabla_x f:=\frac{\partial f}{\partial x}$ which is a column vector, and $\nabla^2_x f:=\frac{\partial^2 f}{\partial x^2}$. For a smooth mapping $F:\rea^n\to\rea^m$, we define the $ij-$element of its $n\times m$ Jacobian matrix as $(\nabla_x F)_{ij}:=\frac{\partial F_i}{\partial x_j}$. When clear from the context the subindex in $\nabla$ is omitted. For a given vector $x\in\rea^{n}$, we say that $A$ is \textit{positive definite (semi-definite)}, denoted as $A\succ0$  ($A\succeq0$), if $A=A^{\top}$ and $x^{\top}Ax>0$ ($x^{\top}Ax\geq0$) for all $x\in \rea^{n}-\{0_{n} \}$ ($\rea^{n}$). For a given vector $x\in\rea^n$ , we denote the Euclidean norm as $\norm{x}$.
Given the distinguished element $x_\star\in\rea^n$, we define the constant matrix $B_\star:=B(x_\star)\in\rea^{n\times m}$. We denote $\rea_+$ as the set of positive real numbers and $\rea_{\geq0}$ as the set $\rea_+\cup \{0\}$. Let $x\in\rea^{n}$ and $y\in\rea^m$, we define $col(x,y):=[x^\top y^\top]^\top$. We denote $e_i$ as the $i^{th}$ element of the canonical basis of $\rea^n$. All the functions considered in this manuscript are assumed to be (at least) twice continuously differentiable. 

\textbf{Caveat}: when possible, we omit the arguments to simplify the notation.
\section{Preliminaries and Problem setting }\label{prel}
This section describes the pH representation of the class of mechanical systems considered throughout this paper. Moreover, we characterize the dead-zone phenomenon and we revisit a PI-PBC technique, which we compare against our approach in Section \ref{exp_res}. We conclude this section with the problem formulation.
\subsection{Description of a class of mechanical systems}
Consider a pH mechanical system of the form
\cc{\begin{equation}\label{sysmec}
	\arraycolsep=1pt \def\arraystretch{1.2}
	\begin{array}{rcl}
		\dot{x} &=& \begin{bmatrix}
			0_{n\times n}&I_n\\
			-I_n &-{D(x)}\end{bmatrix}\begin{bmatrix}
			\nabla_q{H}(x) \\ \nabla_{p}{H}(x)
		\end{bmatrix}+\begin{bmatrix}
			0_{n\times n} \\ G
		\end{bmatrix}u+\begin{bmatrix}
		0_n\\\beta
	\end{bmatrix}\\
		y&=&G^\top\inv{M}{{(q)p}}
	\end{array}
\end{equation}}
where $x:=col(q,p)$ with ${q},{p} \in \rea^{n}$ being the generalized positions and momenta vectors, respectively; ${H:\rea^n\times\rea^n\to \rea}$ is the Hamiltonian of the system defined as
\begin{equation}\label{OLHam}
	{H}(x)=\displaystyle\frac{1}{2}{p}^\top \inv{M}({q}) {p}+U(q);
\end{equation}
the potential energy of the system is denoted with ${U:\rea^n\to\rea}$; ${M:\rea^n\to\rea^{n\times n}}$ corresponds to the mass-inertia matrix verifying $M(q)\succ 0_{n\times n}$; $D:\rea^n\times \rea^n\to\rea^{n\times n}$ represents the natural damping satisfying ${D(q,p)\succeq 0_{n\times n}}$; $u,y \in \rea^{n}$ is the input and output vector, respectively; $G\in\rea^{n\times n}$ is the non-singular input matrix\footnote{This matrix captures the conversion of the physical domain of the actuator (e.g., electrical, pneumatic, or hydraulic) to the mechanical domain. For simplicity, we have defined it as diagonal matrix.}  defined as $G:=diag\{g_1,\hdots,g_n\}$ with $g_i\in\rea$; and $\beta\in\rea^n$ is a constant vector that captures a particular behaviour of the system explained in the next section.

\subsection{Dead-zone characterization}
\begin{figure}[t]
	\centering
	\includegraphics[width=0.7\columnwidth]{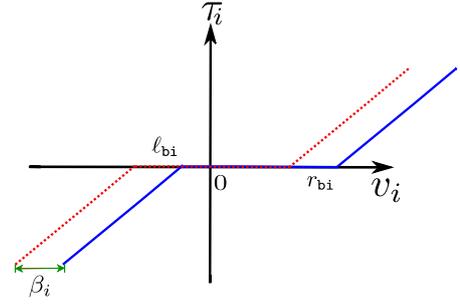}
	\caption{\centering Symmetric dead-zone (dotted red line) and asymmetric dead-zone (solid blue line), where $\beta$ represents the shift with respect to the vertical axis of the asymmetric dead-zone.}\label{dz}
\end{figure}
\cc{A particular phenomenon inherent in any actuator is the presence of dead-zones, i.e., the range of input control values where the output (force or torque) is zero. A simple dead-zone model is given by
\begin{equation}\label{dzcharacteristic}
		\tau_i:=\begin{cases}
				v_i-r_{\tt{bi}},\qquad &v_i> r_{\tt{bi}}\\
				0,\qquad &l_{bi}\leq v_i\leq r_{\tt{bi}}\\
				v_i-\ell_{\tt{bi}},\qquad &v_i< \ell_{\tt{bi}},
			\end{cases}
\end{equation}
with $i\in \{1,\hdots,n\}$; $r_{\tt{bi}}> 0$; $\ell_{\tt{bi}}< 0$; the input $v=col(v_1,\hdots,v_n)$ with $v_i\in\rea$; and the output $\tau:=col(\tau_1,\hdots,\tau_n)$ with $\tau_i \in \rea$. The graphical representation of \eqref{dzcharacteristic} is given in Fig.~\ref{dz}. Note that $\beta_i \in \rea$ represents an offset term that transforms the dead-zone from symmetric to asymmetric. In this manuscript, we represent this offset with the vector $\beta:=col(\beta_1,\hdots,\beta_n)$ considered in \eqref{sysmec}.}

\subsection{A PI-PBC strategy}
The gains of the PI-PBC strategy~--~described in \cite{ortega2021pid,borja2020new}~--~admit a physical interpretation which endows the tuning process with more intuition. Moreover, its practical implementation is eased as the control design process is reduced to select properly to achieve a certain performance. However, this control technique does not account for the dead-zone effect resulting in steady-state errors as reported in \cite{chan2021tuning,chan2022tuning}. In this paper, we aim to extend such a methodology by adapting it to deal with such a nonlinearity. For the sake of completeness, we summarize the PI-PBC approach in Proposition \ref{prop1}.
\begin{proposition}[PI-PBC]\label{prop1}
	Consider the pH system \eqref{sysmec} with $\beta=0_n$; the desired equilibrium  $x_\star:=(q_\star,0_n)$; the constant matrices $K_P,K_I\in\rea^{n\times n}$ verifying $K_I\succ0$ and $K_P\succeq 0$; the control law 
	\begin{equation}\label{pi}
		u=u_{\tt{pi}}:=-\inv{G}(K_P\dot{q}+K_I(q-{q}_\star)-\nabla_q U(q))
	\end{equation}
	and assume that 
	$$H_{\tt{pi}}(x):=\dfrac{1}{2}p^\top \inv{M}(q)p+\dfrac{1}{2}(q-\tilde{q})^\top K_I (q-\tilde{q})$$
	verifies
	$$x_\star =\arg \min H_{\tt{pi}} (x).$$
	It follows that the closed-loop system has a globally asymptotically stable equilibrium at $x_\star$ with Lyapunov candidate $H_{\tt{pi}}(x)$.		\hfill $\blacksquare$
\end{proposition}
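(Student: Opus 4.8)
The plan is to observe that the control law \eqref{pi} simultaneously shapes the potential energy and injects damping, so that the closed loop is again a port-Hamiltonian system with Hamiltonian $H_{\tt{pi}}$; stability then follows from a standard Lyapunov-plus-invariance argument.

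First I would substitute $u=u_{\tt{pi}}$ into \eqref{sysmec} with $\beta=0_n$. Writing $\nabla_q H(x)=\tfrac12\nabla_q\big(p^\top\inv M(q)p\big)+\nabla_q U(q)$, the feedback term $Gu_{\tt{pi}}=-\big(K_P\dot q+K_I(q-q_\star)-\nabla_q U(q)\big)$ cancels the open-loop potential force $\nabla_q U$, replaces it by the shaped force $K_I(q-q_\star)$, and adds the dissipative term $-K_P\dot q$. Since $\nabla_q H_{\tt{pi}}(x)=\tfrac12\nabla_q\big(p^\top\inv M(q)p\big)+K_I(q-q_\star)$ and $\nabla_p H_{\tt{pi}}(x)=\inv M(q)p$, the closed loop reads
\begin{equation*}
\dot x=\begin{bmatrix}0_{n\times n}&I_n\\-I_n&-(D(x)+K_P)\end{bmatrix}\nabla_x H_{\tt{pi}}(x),
\end{equation*}
i.e., a pH system with Hamiltonian $H_{\tt{pi}}$ and dissipation $D(x)+K_P\succeq 0_{n\times n}$. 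In particular $\nabla_x H_{\tt{pi}}(x_\star)=0_{2n}$, so $x_\star$ is an equilibrium.

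Next I would verify that $H_{\tt{pi}}$ is a Lyapunov function for $x_\star$. Because $M(q)\succ 0_{n\times n}$ and $K_I\succ 0$, $H_{\tt{pi}}$ is positive definite and radially unbounded with respect to $x_\star$, consistently with the assumption $x_\star=\arg\min H_{\tt{pi}}(x)$. Differentiating along the closed loop and using the skew-symmetry of the interconnection matrix yields
\begin{equation*}
\dot H_{\tt{pi}}=-\big(\inv M(q)p\big)^\top\big(D(x)+K_P\big)\big(\inv M(q)p\big)\le 0,
\end{equation*}
so $x_\star$ is stable in the sense of Lyapunov.

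Finally I would invoke LaSalle's invariance principle. On the set $\{x:\dot H_{\tt{pi}}(x)=0\}$ we have $\big(D(x)+K_P\big)\inv M(q)p=0$; under the customary detectability condition~---~e.g. $D(x)+K_P\succ 0_{n\times n}$~---~this forces $p=0_n$, hence $\dot q=0_n$. On any trajectory that remains in this set, $\dot p=-\nabla_q H_{\tt{pi}}|_{p=0_n}=-K_I(q-q_\star)$ must also vanish, and $K_I\succ 0$ then gives $q=q_\star$; thus the largest invariant set contained in $\{\dot H_{\tt{pi}}=0\}$ is $\{x_\star\}$, and radial unboundedness of $H_{\tt{pi}}$ promotes the conclusion to global asymptotic stability. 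I expect this last step to be the delicate one: with only $K_P\succeq 0$, the equality $\dot H_{\tt{pi}}=0$ need not imply $p=0_n$, so one must either sharpen it to a definiteness/zero-state-detectability assumption or exploit extra structure of $D(x)$, along the lines of \cite{ortega2021pid,borja2020new}.
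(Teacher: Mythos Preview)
Your argument is correct and matches the paper's approach: the paper does not give a separate proof of Proposition~\ref{prop1} (it is quoted from \cite{ortega2021pid,borja2020new}), but the proof of Theorem~\ref{thm1} follows exactly your template~---~rewrite the closed loop in pH form, check the minimum of the shaped Hamiltonian, compute $\dot H\le 0$, and apply LaSalle together with radial unboundedness. Your caveat about $K_P\succeq 0$ versus $K_P\succ 0$ is well taken; the paper avoids this issue in Theorem~\ref{thm1} by assuming $K_P\succ 0_{n\times n}$, so that $D+K_P\succ 0_{n\times n}$ and $\dot H_d=0\iff \dot q=0_n$ holds directly.
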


\subsection{Problem formulation}
Now, we formulate the problem under study as follows\\
\\
\textit{``Propose a passivity-based control strategy suitable to compensate dead-zones, symmetric and asymmetric, while stabilizing \eqref{sysmec} at the desired equilibrium $x_\star$."} 

\section{Main Results}\label{main}

The dead-zone may severely impact the closed-loop performance since the nonzero control input falls within the dead-zone region resulting in steady-state errors. In this section, we describe a PBC strategy to handle this particular nonlinearity; the main advantage of employing a PBC methodology is that the energy function of the closed-loop can be chosen as a Lyapunov candidate, simplifying the stability proof. 

The proposed PBC methodology in this paper, i.e., $u_{\tt{pidz}}$, is an extension of the standard PI-PBC \eqref{pi}. For this extension, we employ a scheme based on the traditional method (see \cite{cho1998convergence,rubio2013proportional}), which consists in: i) defining the inverse of the dead-zone of Fig.~\ref{dz}, whose graphical representation is visualized in Fig.\ref{dz_inv}a, and then, ii) designing the controller $u_{\tt{pidz}}$.
\begin{figure}[t]
	\centering
	\includegraphics[width=0.8\columnwidth]{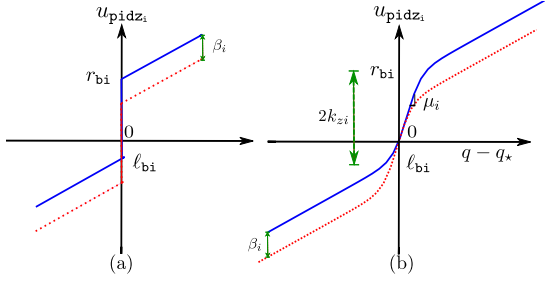}
	\caption{\centering (a) Inverse of Fig.~\ref{dz}. (b) Smooth inverse approximation.}\label{dz_inv}
\end{figure}

In particular, we consider the following structure for the controller ${u_{\tt{pidz}}}$
\begin{equation}\label{ucompdz}
 	u_{\tt{pidz}}:=u_{\tt{dz}}+u_{\tt{pi}}.
\end{equation}
To formulate the dead-zone compensator part $u_{\tt{dz}}$, we employ a smooth approximation of Fig.~\ref{dz_inv}a, which is given in Fig.~\ref{dz_inv}b. The full design of the proposed PBC is described in Theorem \ref{thm1}.

	\begin{theorem}\label{thm1}
	Consider the fully-actuated mechanical system \eqref{sysmec}; the desired equilibrium $x_\star:=(q_\star,0_n)$; and the control law \eqref{ucompdz} with $u_{\tt{pi}}$ as defined in \eqref{pi} and
	\begin{equation}\label{controllaw1}
		\arraycolsep=1pt \def\arraystretch{1.2}
		\begin{array}{rcl}
			u_{\tt{dz}}:=-\inv{G}(K_Z\sum_{i=1}^{n}e_i\tanh(\mu_i \tilde{q}_i)+\beta),
		\end{array}
	\end{equation}
	where $K_P, K_I \in \rea^{n\times n}$ verifies $K_P,K_I\succ 0_{n\times n}$; $i\in\{1,\hdots,n\}$; $K_Z:=diag\{k_{z1},\hdots,k_{zn}\}$ with $k_{zi} \in \rea_{+}$; $\beta:=col(\beta_{1},\hdots,\beta_{n})$; and $\mu:=diag(\mu_{1},\hdots,\mu_{n})$ with $\mu_i \in \rea_{+}$. \cc{Then, the closed loop \eqref{sysmec}-\eqref{ucompdz} has a \textit{global asymptotic stable} equilibrium at $x_\star$ with Lyapunov candidate
	\begin{equation}\label{Hd}
		\arraycolsep=1pt \def\arraystretch{1.2}
		\begin{array}{rcl}
			{H_d}(x)=\displaystyle\frac{1}{2}{p}^\top \inv{M}({q}) {p}&+&\dfrac{1}{2}\tilde{q}^\top K_I \tilde{q}+K_Z\sum_{i=1}^{m}\dfrac{\ln(\cosh(\mu_i\tilde{q}_i))}{\mu_i}
		\end{array}
	\end{equation}	
	where $\tilde{q}:=q-q_\star$.}
\end{theorem}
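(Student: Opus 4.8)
\emph{Closed loop in pH form.} The plan is to substitute the control law \eqref{ucompdz} into \eqref{sysmec}, recognize the closed loop as a port-Hamiltonian system with Hamiltonian $H_d$ of \eqref{Hd}, and then conclude global asymptotic stability with a Lyapunov/LaSalle argument, following the template behind Proposition~\ref{prop1}. Concretely, I would insert $u=u_{\tt{dz}}+u_{\tt{pi}}$, with $u_{\tt{pi}}$ as in \eqref{pi} and $u_{\tt{dz}}$ as in \eqref{controllaw1}, into the momentum equation of \eqref{sysmec}. Two cancellations are the crux here: the combination $Gu_{\tt{dz}}+\beta$ collapses to $-K_Z\sum_{i=1}^{n}e_i\tanh(\mu_i\tilde q_i)$, so the offset $\beta$ of the asymmetric dead-zone is removed exactly; and the term $G^{-1}\nabla_q U(q)$ contained in $u_{\tt{pi}}$, once multiplied by $G$, cancels the potential force $-\nabla_q U(q)$ of \eqref{sysmec}. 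Writing $\dot q=M^{-1}(q)p=\nabla_p H(x)$ and using the identity $\tfrac{d}{d\tilde q_i}\big(\mu_i^{-1}\ln\cosh(\mu_i\tilde q_i)\big)=\tanh(\mu_i\tilde q_i)$, one checks that $\nabla_p H_d=M^{-1}(q)p$ and $\nabla_q H_d=\nabla_q\big(\tfrac12 p^\top M^{-1}(q)p\big)+K_I\tilde q+K_Z\sum_{i=1}^{n}e_i\tanh(\mu_i\tilde q_i)$, hence the closed loop reads
\[
	\dot x=\begin{bmatrix}0_{n\times n}&I_n\\ -I_n&-\big(D(x)+K_P\big)\end{bmatrix}\nabla H_d(x),
\]
i.e. a pH system with the same interconnection matrix as \eqref{sysmec}, damping $D(x)+K_P$, and energy function $H_d$.

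\emph{$H_d$ as a proper Lyapunov candidate.} Next I would verify that $H_d$ qualifies as a Lyapunov function for $x_\star$. Since $M(q)\succ0_{n\times n}$, $K_I\succ0_{n\times n}$, $k_{zi}>0$, and $\ln\cosh(\cdot)\ge0$ vanishing only at $0$, each summand of $H_d$ is nonnegative and all of them vanish simultaneously only at $x=x_\star$; therefore $H_d(x)\ge H_d(x_\star)=0$ with equality iff $x=x_\star$, and $\nabla H_d(x_\star)=0_{2n}$, so $x_\star=\arg\min H_d$. Radial unboundedness follows because $\tfrac12\tilde q^\top K_I\tilde q$ grows without bound as $\|q\|\to\infty$ while $\tfrac12 p^\top M^{-1}(q)p$ grows without bound as $\|p\|\to\infty$ on bounded position sets, all summands being nonnegative; so the sublevel sets of $H_d$ are compact.

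\emph{Dissipation, stability, LaSalle.} Differentiating $H_d$ along the closed loop and using the skew-symmetry of the interconnection matrix gives $\dot H_d=-p^\top M^{-1}(q)\big(D(x)+K_P\big)M^{-1}(q)p\le0$, since $D(x)\succeq0_{n\times n}$ and $K_P\succ0_{n\times n}$; moreover $\dot H_d=0$ forces $M^{-1}(q)p=0_n$, i.e. $p=0_n$. Stability of $x_\star$ and boundedness of trajectories follow from the two previous paragraphs. For asymptotic stability I would invoke LaSalle's invariance principle: on the largest invariant set contained in $\{x:p=0_n\}$ one has $\dot q=0_n$ and, by invariance, $\dot p=0_n$; since the $q$-gradient of the kinetic energy vanishes at $p=0_n$, this reduces to $K_I\tilde q+K_Z\sum_{i=1}^{n}e_i\tanh(\mu_i\tilde q_i)=0_n$. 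Pre-multiplying by $\tilde q^\top$ gives $\tilde q^\top K_I\tilde q=-\sum_{i=1}^{n}k_{zi}\tilde q_i\tanh(\mu_i\tilde q_i)$, whose left-hand side is strictly positive whenever $\tilde q\neq0_n$ while its right-hand side is nonpositive because $\tilde q_i\tanh(\mu_i\tilde q_i)\ge0$; hence $\tilde q=0_n$ and the invariant set is $\{x_\star\}$. Together with radial unboundedness of $H_d$, this establishes global asymptotic stability of $x_\star$.

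\emph{Main obstacle.} The mechanical part is routine once the closed loop is in pH form; the step that genuinely exploits the structure of the compensator is the reduction of the LaSalle set, i.e. showing that $K_I\tilde q+K_Z\sum_i e_i\tanh(\mu_i\tilde q_i)=0_n$ has $\tilde q=0_n$ as its only solution — which relies on $K_I\succ0_{n\times n}$ together with $\tanh$ being odd and strictly increasing (so that $\tilde q_i\tanh(\mu_i\tilde q_i)\ge0$); equivalently, $q\mapsto H_d(q,0_n)$ is strictly convex, its Hessian being $K_I$ plus a positive-definite diagonal matrix. The other point requiring care is the bookkeeping of the first step: checking that $\beta$ and $\nabla_q U(q)$ cancel exactly and that \eqref{Hd} is indeed the closed-loop Hamiltonian.
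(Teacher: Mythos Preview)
Your proposal is correct and follows essentially the same route as the paper: rewrite the closed loop in port-Hamiltonian form with Hamiltonian $H_d$, verify that $H_d$ has a strict global minimum at $x_\star$ and is radially unbounded, compute $\dot H_d\le 0$, and apply LaSalle on $\{p=0_n\}$. Your treatment is in fact slightly more complete than the paper's in two places: you establish the minimum of $H_d$ globally (the paper checks only the Hessian at $x_\star$), and you justify the implication $K_I\tilde q+K_Z\sum_i e_i\tanh(\mu_i\tilde q_i)=0_n\Rightarrow\tilde q=0_n$ via the sign argument, whereas the paper simply asserts it.
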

\begin{proof}
	Substituting \eqref{ucompdz} in \eqref{sysmec}, we notice that the closed loop preserves mechanical structure, i.e.,
	\begin{equation}\label{tg}
		\arraycolsep=1pt \def\arraystretch{1.2}
		\begin{array}{rcl}
			\begin{bmatrix}
				\dot{{q}}\\\dot{{p}}
			\end{bmatrix} &=& \begin{bmatrix}
				0_{n\times n}&I_n\\
				-I_n &-{D}({q,p})-K_P\end{bmatrix}\nabla H_d(x)		
		\end{array}
	\end{equation}
	where $H_d(x)$ is the desired Hamiltonian defined in \eqref{Hd}.
	
	Then, considering $H_d(x)$ as the Lyapunov candidate, it follows that
	\begin{eqnarray}\label{jac}
		\left(\nabla_q {H_d}\right)_\star=0_n,~
		\left(\nabla_p H_d\right)_\star=0_n.
	\end{eqnarray}
	Moreover, we have that 
	\begin{equation}\label{hessian}
		\arraycolsep=1pt \def\arraystretch{1.2}
		\begin{array}{rcl}
			\left(\nabla_q^2 H_d\right)_\star&=&K_I+\mu K_Z\succ0_{n\times n},\\
			\left(\nabla_{qp} H_d\right)_\star&=&\left(\nabla_{pq} H_d \right)_\star=0_{n\times n},\\
			\left(\nabla_p^2 H_d\right)_\star&=&\inv{M}(q_\star)\succ0_{
				n\times n}.\\
		\end{array}
	\end{equation}
	which guarantees that 
	\begin{equation*}
		\left(\nabla_x^2 H_d\right)_\star\succ0_{2n\times 2n}.
	\end{equation*}
	Therefore, from \eqref{jac}-\eqref{hessian}, we have that $x_\star$ is a stable equilibrium of the closed loop \eqref{sysmec}-\eqref{ucompdz} (for further details, see Proposition 7.2.8 from \cite{vanderSchaft2017}).
	
	To prove asymptotic stability of $x_\star$, we invoke LaSalle's invariance principle (see \cite{khalil2002nonlinear}). Note we have that
	\begin{equation*}
		\arraycolsep=1pt \def\arraystretch{1.2}
		\begin{array}{rcl}
			\dot{H}_d&=& -\nabla_p^\top H_d (D+K_P)\nabla_p H_d\leq 0.	
		\end{array}
	\end{equation*}
	\ccz{Recall that, from \eqref{tg}, we have 
	\begin{equation*}\arraycolsep=1pt \def\arraystretch{1.2}
		\begin{array}{rcl}
			\dot{q}&=&\nabla_{p}H_d=\inv{M}(q)p\\
			\dot{p}&=&-\nabla_q H_d-(D(q,p)+K_P)\nabla_p H_d,
	 	\end{array}
	\end{equation*}	
	with $\nabla_q H_d= \dfrac{1}{2}\nabla_q\left(p^\top \inv{M}p\right)+K_I\tilde{q}+K_Z\displaystyle\sum^n_{i=1}e_i\tanh(\mu_i \tilde{q}_i)$. Accordingly, we have the following chain of implications}
	\begin{equation*}
		\arraycolsep=1pt \def\arraystretch{1.2}
		\begin{array}{rcl}
			&\dot{H}_d&\equiv0	\iff \dot{q}=0_n \iff p=0_n \implies \dot{p}=0_n\\
			&\iff& K_I\tilde{q}+K_Z\sum^n_{i=1}e_i\tanh(\mu_i \tilde{q}_i)=0_n\\
			&\iff& q=q_\star.
		\end{array}
	\end{equation*}
	Thus, the closed loop \eqref{sysmec}-\eqref{ucompdz} has an asymptotically stable equilibrium point at $x_\star$.
	We conclude the proof by proving the global properties of the equilibrium point; note that the Lyapunov candidate is radially unbounded since ${H}_d\to\infty$ as $\norm{q}\to\infty$ and $\norm{p}\to\infty$.
\end{proof}

\begin{remark}
	We employ $\tilde{q}_i$ as the feedback variable in $u_{\tt{dz}}$ because we need to ensure a nonzero input while the system has  not reached the equilibrium, i.e., $\tilde{q}_i \neq 0$. 
\end{remark}
\begin{remark}\label{r1}
	Regarding the selection of the parameters of \eqref{controllaw1}, \cc{the term $\beta$ corresponds to the offset that shifts the dead-zone centered in zero (see Fig.~\ref{dz}; in particular,  $\beta=0_n$ corresponds to a symmetric dead-zone). Thus, a proper characterization of this parameter is required to implement of the controller \eqref{controllaw1}.} Additionally, the constant $k_{zi}$ is associated with the length of the dead-zone, i.e., $k_{zi}:=\dfrac{r_{\tt{bi}}-\ell_{\tt{bi}}}{2}$ (see Fig~\ref{dz_inv}b). Furthermore, note that selecting $\mu_i>>0$ improves the approximation of Fig~\ref{dz_inv}b to Fig~\ref{dz_inv}a.
\end{remark}

In the scenario that the dead-zone is not properly characterized, the controller still improves the steady-state error. For example, if the length of the dead-zone is underestimated, a steady-state error remains but is reduced. On the other hand, an overshoot in the transient response may occur if the length is overestimated. The latter behaviour is explained in the next section, where we analyze the closed-loop behaviour near the equilibrium.

\section{Performance near the equilibrium}\label{performance}
In this section, we discuss the behaviour of the closed-loop \eqref{sysmec}-\eqref{ucompdz} near the equilibrium $x_\star$. To this end, we follow the same procedure as in \cite{chan2021tuning,chan2022tuning}, which consists of linearizing the closed-loop around the desired equilibrium, and then finding a transformation such that the linearized matrix has a \textit{saddle point form}. The latter structure has been shown effective in analyzing the performance of the system near the equilibrium.

We first introduce the linearized vector $\hat{x}:=x-x_\star$. Then, the linearized dynamics of the closed loop \eqref{sysmec}-\eqref{ucompdz} corresponds to
\begin{equation}\label{lin}
	\dot{\hat{x}}=-\begin{bmatrix}
		0_{n\times n}&-\inv{M}_\star\\
		K_I+\mu K_z&(D+K_P)\inv{M}_\star
	\end{bmatrix}\hat{x}.
\end{equation}
To obtain the saddle point matrix form of \eqref{lin}, we introduce the similarity transformation matrix 
$$T:=\begin{bmatrix}
	0_{n\times n}&\phi_M\\
	\phi_P&0_{n\times n},
\end{bmatrix}$$
where $\phi_M,\phi_P\in \rea^{n\times n}$ are upper triangular matrices obtained from the Cholesky decomposition (see \cite{horn2012matrix})
$$\inv{M}_\star=\phi_M^\top\phi_M,~K_I+\mu K_z=\phi_P^\top \phi_P.$$
Therefore, the dynamics in the new coordinates $\hat{z}:=T\hat{x}$ correspond to
\begin{equation}\label{saddle}
	\dot{\hat{z}}:=-\mathcal{N}\hat{z}, ~\mathcal{N}:=\begin{bmatrix}
		\phi_M(D+K_P)\phi_M^\top&\phi_M\phi_P^\top\\
		-\phi_P\phi_M^\top&0_{n\times n}
	\end{bmatrix},
\end{equation}
where $\mathcal{N}$ is a class of saddle point matrices (see \cite{benzi2006eigenvalues}). 

Subsequently, we study the performance of the system around the equilibrium by analyzing the eigenvalue problem associated with \eqref{saddle}, i.e., $\mathcal{N}w=\lambda w$ with $\lambda\in\cmp$ being an eigenvalue of $\mathcal{N}$ and ${w:=col(w_1,w_2)}$ being its corresponding eigenvector with $w_1,w_2 \in \cmp^n$. Rewriting the eigenvalue problem, we get that
\begin{equation}\label{saddle2}
	\arraycolsep=1pt \def\arraystretch{1.2}
	\begin{array}{rc}
		\phi_M(D+K_P)\phi_M^\top w_1+\phi_M\phi_P^\top w_2&=\lambda w_1\\
		-\phi_P\phi_M^\top w_1&=\lambda w_2
	\end{array}			
\end{equation}
Then, from \eqref{saddle2}, we obtain 
\begin{equation}\label{eig_pro}
	\lambda^2 - \dfrac{w_1^* \phi_M\mathcal{R}\phi_M^\top w_1}{\norm{w_1}^2}\lambda+\dfrac{v^*\phi_M\mathcal{P}\phi_M^\top w_1}{\norm{w_1}^2}=0,
\end{equation}
with $\mathcal{R}:= D+K_P$ and $\mathcal{P}:=K_I+\mu K_Z$. Then, the solution of \eqref{eig_pro} is given by
\begin{equation}\label{saddle3}
	\arraycolsep=1pt \def\arraystretch{1.2}
	\begin{array}{rcl}
		\lambda&=& \dfrac{1}{2} \dfrac{w_1^* \phi_M\mathcal{R}\phi_M^\top w_1}{\norm{w_1}^2}\\
		&\pm&\sqrt{\left(\dfrac{w_1^*\phi_M\mathcal{R}\phi_M^\top w_1}{\norm{w_1}^2}\right)^2-4\dfrac{w_1^*\phi_M \mathcal{P}\phi_M^\top w_1}{\norm{w_1}^2}}.
	\end{array}
\end{equation}

Note that \eqref{saddle3} provides a solution for every eigenvalue of $\mathcal{N}$ in terms of the gains $K_P,K_I, K_Z,\mu$.  By closed inspection of \eqref{saddle3}, we obtain information such as rise-time, damping ratio, or oscillations as shown in \cite{chan2021tuning,chan2022tuning}. A direct result from \eqref{saddle3} is summarized in Theorem 2.
\begin{theorem}\label{thm2}
	If
	\begin{equation}\label{tuningrule}
		4\maxlamb{\mathcal{P}}\maxlamb{M_\star}\leq  \minlamb{\mathcal{R}}^2,
	\end{equation}
	then, the spectrum of $\mathcal{N}$ is real and positive.
\end{theorem}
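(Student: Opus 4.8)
The plan is to examine each eigenvalue of $\mathcal{N}$ through the scalar quadratic \eqref{eig_pro} and to show that \eqref{tuningrule} forces its discriminant to be nonnegative, while the definiteness of $\mathcal{R}:=D+K_P$ and $\mathcal{P}:=K_I+\mu K_Z$ forces both roots to be positive. Fix an eigenpair $\mathcal{N}w=\lambda w$ with $w=col(w_1,w_2)$; since $\phi_M\phi_P^\top$ and $\phi_P\phi_M^\top$ are nonsingular, the block structure of \eqref{saddle2} rules out $w_1=0_n$ (and $\lambda=0$), so $\lambda$ solves $\lambda^2-a\lambda+c=0$ with $a:=\frac{w_1^*\phi_M\mathcal{R}\phi_M^\top w_1}{\norm{w_1}^2}$ and $c:=\frac{w_1^*\phi_M\mathcal{P}\phi_M^\top w_1}{\norm{w_1}^2}$, both real because $\phi_M\mathcal{R}\phi_M^\top$ and $\phi_M\mathcal{P}\phi_M^\top$ are real symmetric. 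The decisive step is the substitution $\eta:=\phi_M^\top w_1\in\cmp^n\setminus\{0_n\}$: the Cholesky identity $\inv{M}_\star=\phi_M^\top\phi_M$ gives $\norm{w_1}^2=\eta^* M_\star\eta$, so that $a=\frac{\eta^*\mathcal{R}\eta}{\eta^* M_\star\eta}$ and $c=\frac{\eta^*\mathcal{P}\eta}{\eta^* M_\star\eta}$ now share the positive denominator $\eta^* M_\star\eta$.

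I would then estimate the discriminant. Over the common denominator,
\[ a^2-4c=\frac{(\eta^*\mathcal{R}\eta)^2-4(\eta^*\mathcal{P}\eta)(\eta^* M_\star\eta)}{(\eta^* M_\star\eta)^2}, \]
and bounding the numerator by the extreme eigenvalues of the symmetric positive definite matrices $\mathcal{R},\mathcal{P},M_\star$, namely $(\eta^*\mathcal{R}\eta)^2\geq\minlamb{\mathcal{R}}^2\norm{\eta}^4$ and $4(\eta^*\mathcal{P}\eta)(\eta^* M_\star\eta)\leq 4\maxlamb{\mathcal{P}}\maxlamb{M_\star}\norm{\eta}^4$, one obtains numerator $\geq\big(\minlamb{\mathcal{R}}^2-4\maxlamb{\mathcal{P}}\maxlamb{M_\star}\big)\norm{\eta}^4\geq0$ precisely by \eqref{tuningrule}. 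Hence $a^2-4c\geq0$, so $\lambda\in\rea$. For positivity, $a>0$ because $\mathcal{R}\succ0$ and $M_\star\succ0$, and $c>0$ because $\mathcal{P}\succ0$; therefore $0\leq\sqrt{a^2-4c}<\sqrt{a^2}=a$, and both roots $\lambda=\tfrac12\big(a\pm\sqrt{a^2-4c}\big)$ are strictly positive. Since the eigenpair was arbitrary, the spectrum of $\mathcal{N}$ is real and positive.

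The one step requiring care is this very change of variables. If one instead bounds $a$ and $c$ in the original coordinates, i.e. with denominator $\norm{w_1}^2$, the mass-matrix dependence splits into $\maxlamb{M_\star}$ (from the $\mathcal{R}$-term) and $1/\minlamb{M_\star}$ (from the $\mathcal{P}$-term), and one only recovers the stronger condition $\minlamb{\mathcal{R}}^2\minlamb{M_\star}\geq4\maxlamb{\mathcal{P}}\maxlamb{M_\star}^2$; collecting everything against the single weight $\eta^* M_\star\eta$ is exactly what makes \eqref{tuningrule} the sharp requirement. Everything else --- realness of the quadratic's coefficients and the Rayleigh-quotient squeezing --- is routine.
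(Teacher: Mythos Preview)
Your proof is correct and follows essentially the same route as the paper's: the same substitution $\eta:=\phi_M^\top w_1$ (turning $\norm{w_1}^2$ into $\eta^* M_\star\eta$) followed by the same Rayleigh-quotient bounds on $\mathcal{R}$, $\mathcal{P}$, and $M_\star$ to force a nonnegative discriminant under \eqref{tuningrule}. Your write-up is in fact more complete than the paper's, since you explicitly rule out $w_1=0_n$ and $\lambda=0$ and argue positivity of both roots from $a>0$ \emph{and} $c>0$, whereas the paper only invokes $\mathcal{R}\succ0$ at that step.
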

\begin{proof}
	The proof follows similarly as Proposition 1 from \cite{chan2021tuning}. Note that $\lambda\in\rea_{+}$ if and only if the discriminant is nonnegative,  i.e.,
	\begin{equation*}
		4\dfrac{w_1^* (\phi_M \mathcal{P}\phi_M^\top)w_1}{\norm{w_1}^2}\leq 		\left(\dfrac{w_1^* (\phi_M\mathcal{R}\phi_M^\top)w_1}{\norm{w_1}^2}\right)^2.
	\end{equation*}
	Consider $\eta:=\phi_M^\top w_1$; then,  we have that
	\begin{equation*}
		\arraycolsep=1pt \def\arraystretch{2}
		\begin{array}{rc}
			4\dfrac{\eta^\star\mathcal{P}\eta}{\eta^\star M_\star \eta}&\leq \left(\dfrac{\eta^\star \mathcal{R} \eta}{\eta^\star M_\star \eta}\right)^2\\
			\implies 4\dfrac{(\eta^\star\mathcal{P}\eta)(\eta^\star M_\star \eta)}{(\eta^\star\eta)^2}&\leq \left(\dfrac{\eta^\star \mathcal{R} \eta}{(\eta^\star\eta)}\right)^2
		\end{array}
	\end{equation*}
	Rewriting, we have the following chain of inequalities
	\begin{equation*}
		\arraycolsep=1pt \def\arraystretch{1.2}
		\begin{array}{rcl}
			4\dfrac{(\eta^\star\mathcal{P}\eta)(\eta^\star M_\star \eta)}{(\eta^\star\eta)^2}&\leq& 4\maxlamb{\mathcal{P}}\maxlamb{M_\star} \\
			&\leq& \minlamb{\mathcal{R}}^2\leq\left(\dfrac{\eta^\star \mathcal{R} \eta}{\eta^\star\eta}\right)^2
		\end{array}
	\end{equation*}
	If \eqref{tuningrule} holds, then, the spectrum of $\mathcal{N}$ is real. We conclude this proof by noting that $\lambda\in\rea_{+}$ since $\mathcal{R}\succ 0_{n\times n}$.
\end{proof}

Recall that the imaginary part of the eigenvalues partly characterizes the oscillations. Therefore, condition \eqref{tuningrule} ensures that the spectrum contain no imaginary part, and consequently, it guarantees that there are no oscillations (and overshoot) in the transient response.
\begin{remark}
	Theorem \ref{thm2} is similar to the results in \cite{chan2021tuning,chan2022tuning}. However, we underscore that these results are obtained from different PBC strategies. 
\end{remark}
\begin{remark}\label{r2}
	Note that by selecting a larger $\mu_i$ (\ccz{or equivalently, setting the slope steeper in Fig.~\ref{dz_inv}b}) or overestimating $k_{zi}$, it follows that the transient response may exhibit an overshoot as condition \eqref{tuningrule} no longer holds. 
\end{remark}
\begin{remark}
	One of the main advantages of employing a PBC strategy is that its gains are associated with the physical quantities of the system. In particular, for the control \eqref{ucompdz}, the parameters $K_I,K_Z$, and $\mu$ are associated with the potential energy shaping, while $K_P$ is related to the dissipation. 
\end{remark}

\section{Case study: A 2-DoF Planar manipulator}\label{exp_res}
In this section, we demonstrate the effectiveness of the proposed controller in compensating for dead-zones. To this end, we employ a 2-DoF planar manipulator as shown in Fig.~\ref{quanser} with an asymmetric dead-zone. We refer the reader to \cite{quanser} for further details on this experimental setup. 
\begin{figure}[t]
	\centering
	\includegraphics[width=0.5\columnwidth]{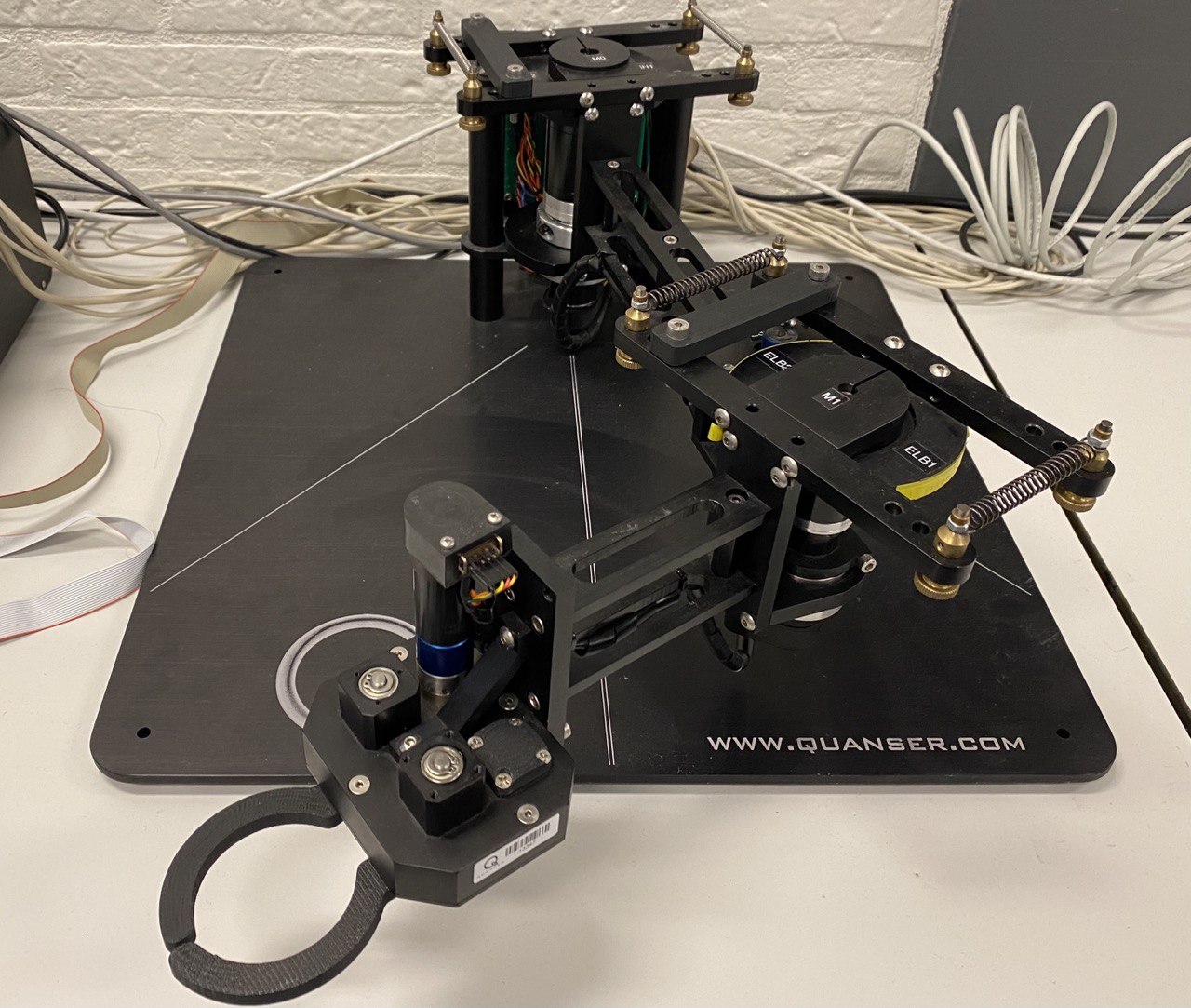}
	\caption{\centering A 2-DoF Quanser planar manipulator.}\label{quanser}
\end{figure}
The mathematical model of this system is given in \eqref{sysmec} with $n=2$. Moreover, the position and momenta of each link are denoted with $q_i$ and $p_i$ with $i=\{1,2\}$, respectively; $U(q)=0$; ${G=diag\{1,0.6\}}$; $D=diag\{1.5964, 0.6971\}$ and 
\begin{equation*}
	\def\arraystretch{1.2}
	\begin{array}{rl}
		M(q_2)&:=\begin{bmatrix}
			a_1+a_2+2b\cos(q_2)&a_2+b\cos(q_2)\\
			a_2+b\cos(q_2)&a_2
		\end{bmatrix},
	\end{array}
\end{equation*} 
where $a_1=0.1547,~a_2=0.0111$, $b=0.0168$.

First, we compare $u_{\tt{pidz}}$ in \eqref{ucompdz} against the standard PI-PBC in \eqref{pi}. The gains for each controller are shown in Table \ref{gain1}. The parameters of \eqref{ucompdz} are chosen with the assumption that the dead-zone is symmetric (i.e., $\beta=0_n$). 
\begin{table}[h!]
	\centering
	\caption{\centering PBC gains}\label{gain1}
	\begin{tabular}{ccc}
		\hline
		\multicolumn{1}{l}{} & \multicolumn{1}{l}{PI-PBC} & \multicolumn{1}{l}{PI-PBC with DZ compensation} \\ \hline
		$K_P$                   & $diag\{1.5,1\}$   & $diag\{1.5,1\} $                                        \\
		$K_I$                   & $diag\{5,3\}$       & $diag\{5,3\}$                                      \\
		$K_Z$                   &                  -       &    $diag\{0.13,0.35\}$                                   \\
		$\mu$                   &                        -    &     10                               \\
		$\beta$                 &                         -  &           $0_2$                          \\ \hline
	\end{tabular}
\end{table}

Next, we select five different configurations to show the efficacy of our controller for dealing with dead-zones. The results are shown in Table \ref{steady} and Fig.~\ref{res}, where the proposed controller \eqref{ucompdz} improves the steady-state error for all cases in comparison to the results of the PI-PBC from \eqref{pi}.	
\begin{table}[h!]
	\centering
	\caption{\centering Steady-state error per controller}\label{steady}
	\begin{tabular}{cccc}
		\hline
		Case & Position        & $u_{\tt{pi}}$(\%L1/ \%L2)   & $u_{\tt{pidz}}$(\%L1/ \%L2)  \\ \hline
		a    & {[}0.6,0.8{]}   &       2.17/6.63 & 1.67/1.75             \\
		b    & {[}-0.6,-0.8{]} &       4.17/4.75  &    2.83/1.375         \\
		c    & {[}-0.4,0.7{]}  &       5.25/6.86  & 3.75 / 3.07         \\
		d    & {[}0.4,-0.7{]}  &       2.75/5  & 2.5/   2.51        \\
		e    & {[}0.5,-0.5{]}  &       1.80/5.6 & 1.54/2.54            \\\hline
	\end{tabular}
\end{table}
\begin{figure}[h!]
	\centering
	\includegraphics[width=0.85\columnwidth]{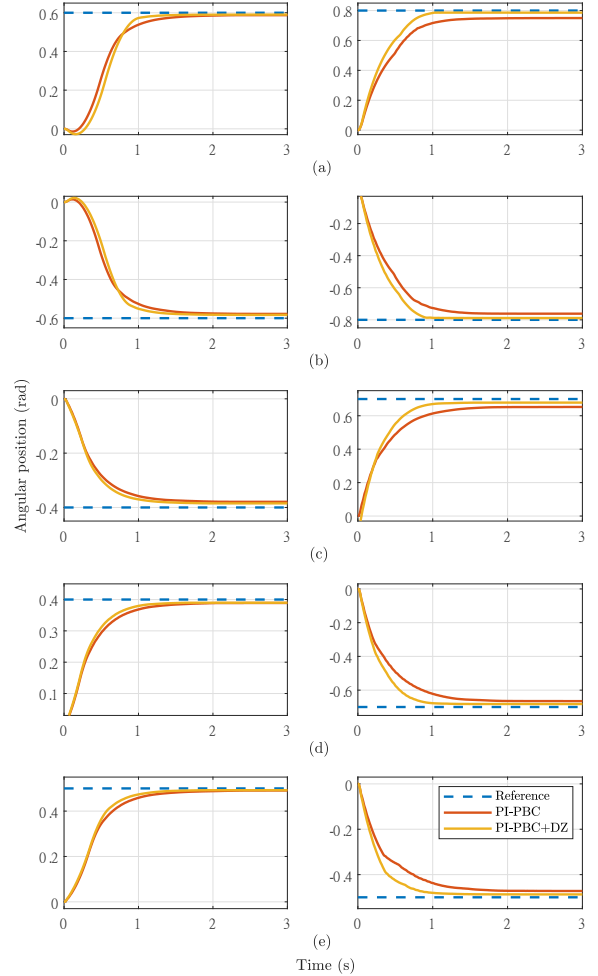}
	\caption{\centering Left: Angular trajectories for link 1 (L1). Right:  Angular trajectories for link 2 (L2).}\label{res}
\end{figure}

However, note that a small steady-state error remains with the proposed controller. This behaviour stems from assuming that the dead-zone is symmetric during the gain selection process, i.e.,  adjusting $\beta=0_n$. Then, to further reduce the steady-state error,  we can modify either $k_{zi}$ or $\beta$ from \eqref{ucompdz}. To show the behaviour of the closed loop for the latter scenarios, we stabilize the experiment setup at $q_\star=col(0.6,0.8)$ with three set of gains as shown in Table \ref{gain2}. We employ Case I as the baseline for comparison purposes, and the result for each case are shown in Table \ref{steady2} and Fig.~\ref{res2}. 
\begin{table}[h!]
	\caption{\centering Gains for $u_{\tt{pidz}}$}\label{gain2}
	\begin{tabular}{cccc}
		\hline
		& Case I            & Case II       & Case III           \\ \hline
		$K_P$   & \multicolumn{3}{c}{diag\{1.5,1\}}                      \\
		$K_I$   & \multicolumn{3}{c}{diag\{5,3\}}                        \\
		$K_Z$   & diag\{0.13,0.35\} & diag\{0.7,1\} & diag\{0.13,0.35\}  \\
		$\mu$   & 10                & 10            & 10                 \\
		$\beta$ & $0_2$             & $0_2$         & $col(-0.016,-0.2)$ \\ \hline
	\end{tabular}
\end{table}
\begin{table}[h!]
	\caption{\centering Steady-state error for each case}\label{steady2}
	\centering
	\begin{tabular}{ccc}
		\hline
		Case & \% Link 1 & \% Link 2 \\ \hline
		I    & 1.67      & 1.75      \\
		II   & 0.25      & 0.80      \\
		III  & 0.22      & 0.47      \\ \hline
	\end{tabular}
\end{table}

For both cases II and III, the steady-state error is clearly improved by augmenting $k_{zi}$ or adjusting $\beta$, respectively. We remark that by augmenting $k_{zi}$ in Case II, the steady-state error is reduced with respect to Case I at the expense of oscillations in the transient response. The overshoot stems from the overestimation of $k_{zi}$ and consequently, the condition \eqref{tuningrule} is not satisfied. Note that ${4\maxlamb{\mathcal{P}}\maxlamb{M_\star}=9.9883}$, then it follows that $\mathcal{R}$ must be chosen such that 
$\minlamb{\mathcal{R}}^2\geq 9.9883,$
to remove the overshoot. 
\begin{figure}[h!]
	\centering
	\includegraphics[width=0.85\columnwidth]{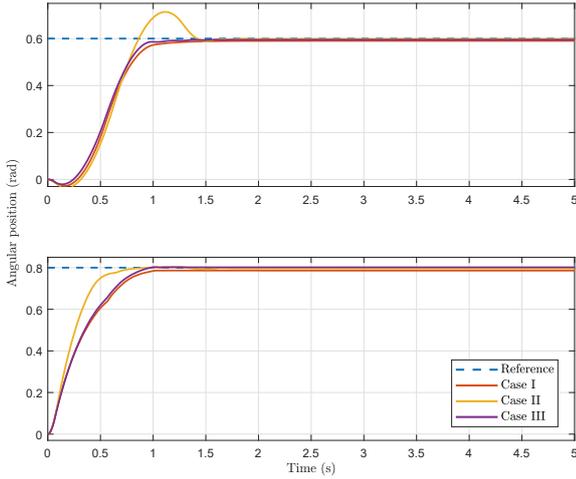}
	\caption{\centering \textbf{(Top)} Angular position for link 1. \textbf{(Bottom)} Angular position for link 2. }\label{res2}
\end{figure}

A video of the latter experiment can be found in:
{\tt{https://www.youtube.com/watch?v=774fufOyXpU}}.

\section{Concluding remarks and future works}\label{conclusions}
We have presented a PBC strategy that compensates for the dead-zone that is pervasive in servomechanisms, and we have provided a discussion on the performance of the controller near the equilibrium. Also, we have successfully implemented the proposed controller in an experimental setup where we have reduced the steady-state error compared to the standard PI-PBC and shown how to select the associated gains properly.  

Regarding possible future research, we propose to extend the described control strategy to underactuated mechanical systems. Furthermore, we aim to generalize the controller to include a more general dead-zone linearity, i.e., instead of \eqref{dzcharacteristic}, assume 
\cc{\begin{equation*}
	\tau_i=\begin{cases}
		g_1(v_i)(v_i-r_{\tt{bi}}),\qquad &v\geq r_{\tt{bi}}\\
		0,\qquad &\ell_{\tt{bi}}<v_i<r_{\tt{bi}}\\
		g_2(v_i) (v_i-\ell_{\tt{bi}}),\qquad &v\leq \ell_{\tt{bi}}
	\end{cases}
\end{equation*}}
for some $g_1(v_i),g_2(v_i)$ functions.
We also aim to extend the proposed approach with an adaptive control technique for scenarios in which the dead-zone is not available for measurement. 

\bibliography{ifacconf}            
\end{document}